\def\BibTeX{{\rm B\kern-.05em{\sc i\kern-.025em b}\kern-.08em T\kern-.1667em\lower.7ex\hbox{E}\kern-.125emX}}
\newcommand{\tQ}{\widetilde{Q}}
\newcommand{\tD}{\widetilde{D}}
\newcommand{\cB}{\mathcal{B}}
\newcommand{\cE}{\mathcal{E}}
\newcommand{\cF}{\mathcal{F}}
\newcommand{\cH}{\mathcal{H}}
\newcommand{\cP}{\mathcal{P}}
\newcommand{\cS}{\mathcal{S}}
\newcommand{\bra}[1]{\langle {#1} |}
\newcommand{\ket}[1]{| {#1} \rangle}
\newcommand{\proj}[1]{\ensuremath{|#1\rangle\!\langle#1|}}
\DeclareMathOperator{\spec}{spec}
\newtheorem{theorem}{Theorem}
\newtheorem{corollary}[theorem]{Corollary}
\newtheorem{lemma}[theorem]{Lemma}
\newtheorem{proposition}[theorem]{Proposition}
\begin{document}

\title{Chain rules for quantum channels}
%R\'enyi

\author{\IEEEauthorblockN{Mario Berta}
\IEEEauthorblockA{\textit{Department of Computing, Imperial College London, UK} \\
\textit{AWS Center for Quantum Computing, Pasadena, USA} \\
\textit{California Institute of Technology, Pasadena, USA}}
\and
\IEEEauthorblockN{Marco Tomamichel}
\IEEEauthorblockA{\textit{Center for Quantum Technologies} \& \\
\textit{Department of Electrical and Computer Engineering}\\
\textit{National University of Singapore, Singapore}}
}

\maketitle

%%%%%%%%%%%%%%%%%%%%%%%%%%%%%%%%%%%%%%%%%%%%%%%%%%%%%%%%%%%%%%%%%%%%%%%%%%%%%%%%%%

\begin{abstract}
Divergence chain rules for channels relate the divergence of a pair of channel inputs to the divergence of the corresponding channel outputs. An important special case of such a rule is the data-processing inequality, which tells us that if the same channel is applied to both inputs then the divergence cannot increase. 
Based on direct matrix analysis methods, we derive several R\'enyi divergence chain rules for channels in the quantum setting. Our results simplify and in some cases generalise previous derivations in the literature.
\end{abstract}

%Short technical note, submitted to IEEE ISIT 2022

%%%%%%%%%%%%%%%%%%%%%%%%%%%%%%%%%%%%%%%%%%%%%%%%%%%%%%%%%%%%%%%%%%%%%%%%%%%%%%%%%%

\begin{IEEEkeywords}
Quantum information measures, quantum R\'enyi divergences, quantum entropy inequalities, quantum statistics, quantum channel discrimination.
\end{IEEEkeywords}

%%%%%%%%%%%%%%%%%%%%%%%%%%%%%%%%%%%%%%%%%%%%%%%%%%%%%%%%%%%%%%%%%%%%%%%%%%%%%%%%%%

\section{Introduction}

A family of fundamental information measures is given by the R\'enyi divergences~\cite{renyi61}, defined for probabilit distributions $P,Q$ over finite alphabets $\mathcal{X}$ and $\alpha \in (1,\infty)$ as
\begin{align}\label{eq:renyi-classical}
D_{\alpha}(P\|Q)&=\frac{1}{\alpha-1} \log Q_{\alpha}(P\|Q)\quad\text{with}\nonumber\\
Q_{\alpha}(P\|Q)&=\sum_{x \in \mathcal{X}} P( x) \left( \frac{P(x)}{ Q(x)} \right)^{\alpha-1}\,,
\end{align}
if $P$ is absolutely continuous with respect to $Q$ and as $+\infty$ otherwise. This definition extends via the respective limits to $\alpha\to1$ (the Kulback-Leibler divergence) and $\alpha\to\infty$ (the max-divergence). They are respectively given as
\begin{align}
D(P\|Q) =\ &D_1(P\|Q) = \sum_{x \in \mathcal{X}} P(x) \log \frac{P(x)}{Q(x)} \quad \textrm{and} \\
&D_{\infty}(P\|Q) = \max_{x} \log \frac{P(x)}{Q(x)} \,.
\end{align}
For $\alpha \in (0,1)$ we rewrite the sum as $\sum_{x \in \mathcal{X}} P(x)^{\alpha} Q(x)^{1-\alpha}$, indicating that absolutely continuity of $P$ with respect to $Q$ is not necessary to keep $D_{\alpha}$ finite for $\alpha < 1$.

For distributions over bipartite alphabets $\mathcal{X}\times\mathcal{Y}$ the R\'enyi divergences enjoy the chain rule property
\begin{align}\label{eq:classical_chain}
&D_{\alpha}(P_{XY}\|Q_{XY})\nonumber\\
&\qquad \leq D_{\alpha}(P_X\|Q_X)+\max_{x\in\mathcal{X}}D_{\alpha}(P_{Y|X=x}\|Q_{Y|X=x}) ,
\end{align}
displaying how to split up R\'enyi divergences with respect to multipartite constituents. Eq.~\eqref{eq:classical_chain} is important in statistics because it can be seen as the crucial entropy inequality to quantify the asymptotic error exponents of adaptive channel discrimination \cite{hayashi09d,polyanski11,berta19,wilde20,salek20}.

In this report, we give various quantum generalizations of Eq.~\eqref{eq:classical_chain}, where probability distributions and conditional probability distributions are replaced with positive operators and positive maps, respectively. Some of these quantum chain rules have been derived previously in the context of adaptive quantum channel discrimination and related tasks \cite{christandl17,kun20,fang21,fawzi21}. The proofs of these earlier results are based on involved techniques from either complex interpolation theory \cite{christandl17}, smooth entropies \cite{kun20}, or convex optimization theory \cite{fawzi21}. In contrast, our derivations are based on direct matrix analysis methods.

The remainder of this report is structured as follows. First, we introduce in Section \ref{sec:methods} our notation and state some mathematical facts from quantum information theory. We then state our main result in Section \ref{sec:main}, after which we sketch applications to quantum channel discrimination in Section \ref{sec:applications}. We conclude with some open question in Section \ref{sec:outlook}. 

%%%%%%%%%%%%%%%%%%%%%%%%%%%%%%%%%%%%%%%%%%%%%%%%%%%%%%%%%%%%%%%%%%%%%%%%%%%%%%%%%%

\section{Notation and methods}\label{sec:methods}

We consider quantum systems described by finite-dimensional inner product spaces labelled by $\cH$. The set $\cS(\cH)$ of quantum states is given by positive semi-definite trace one operators on $\cH$. For $\rho,\sigma\in\cS(\cH)$, we denote by $\rho\ll\sigma$ that the support of $\rho$ is contained in the support of $\sigma$, and inverses of operators are understood as generalized inverses on the support of the operator. Quantum channels are completely positive and trace preserving maps from $\cS(\cH)$ to $\cS(\cH')$. Bipartite quantum systems are described by tensor product spaces $\cH_A\otimes\cH_R$ with correspondingly labelled quantum states $\rho_{AR}\in\cS(\cH_A\otimes\cH_R)$.

An important tool to lift entropy inequalities from the classical to the quantum setting is given by the asymptotic spectral pinching method (see~\cite{hayashi02b}). Namely, the spectral pinching map $P_\sigma(\cdot)$ with respect to $\sigma\in\cS(\cH)$ is given as
\begin{align}
P_\sigma\;:\;\rho\mapsto\sum_{\lambda}P_\lambda\rho P_\lambda\,,
\end{align}
where $\sigma=\sum_\lambda \lambda P_\lambda$ with eigenvalues $\lambda\in\text{spec}(\sigma)$ in the spectrum of $\sigma$ and corresponding mutually orthogonal eigenprojectors $P_\lambda$. An alternative representation of the pinching map is
\begin{align}\label{eq:pinching-measure}
P_\sigma(\cdot)=\int\mu(\mathrm{d}t)\,\sigma^{it}(\cdot)\sigma^{-it}
\end{align}
for some probability measure $\mu$ on $\mathbb{R}$ \cite[Lemma 2.1]{sutter16}. We have that $P_\sigma(\rho)$ commutes with $\sigma$, $\text{Tr}[P_\sigma(\rho)\sigma]=\text{Tr}[\rho\sigma]$, and the pinching inequality \cite{hayashi02b}
\begin{align}\label{eq:pinching}
P_\sigma(\rho)\succeq\big|\text{spec}(\sigma)\big|^{-1}\rho\,,
\end{align}
where here and henceforth $\succeq$ denotes the L\"owner order, and $\big|\text{spec}(\sigma)\big|$ the number of distinct eigenvalues $\sigma$. Eq.~\eqref{eq:pinching} becomes asymptotically powerful as $\left|\text{spec}\left(\sigma^{\otimes n}\right)\right|= O(\text{poly}(n))$ by a type-counting argument~\cite{csiszar98}.

The generalization of the R\'enyi divergences from Eq.~\eqref{eq:renyi-classical} to the quantum setting is not unique. Rather, one asks for any quantum R\'enyi divergence
\begin{align}
\mathbb{D}_{\alpha}(\rho\|\sigma)=\frac{1}{\alpha-1}\log\mathbb{Q}_{\alpha}(\rho\|\sigma)\quad\text{on $\rho,\sigma\in\cS(\cH)$}
\end{align}
to satisfy the following properties:
\begin{enumerate}
\item whenever $\rho$ and $\sigma$ commute, $\mathbb{Q}_{\alpha}(\rho\|\sigma)$ simplifies to the corresponding classical functions in Eq.~\eqref{eq:renyi-classical} and, moreover, for classical quantum states $\rho = \sum_x p_x \proj{x} \otimes \rho_x$ and $\sigma = \sum_x q_x \proj{x} \otimes \sigma_x$, we have
\begin{align}
\mathbb{Q}_{\alpha}(\rho\|\sigma) = \sum_x p_x^{\alpha} q_x^{1-\alpha} \mathbb{Q}_{\alpha}(\rho_x\|\sigma_x) \,;
\end{align}
\item one has the data processing inequality
\begin{align}\label{eq:data-processing}
\mathbb{D}_{\alpha}(\rho\|\sigma)\geq\mathbb{D}_{\alpha}(\cE(\rho)\|\cE(\sigma))
\end{align}
for quantum channels $\cE:\cS(\cH)\to\cS(\cH')$.
\end{enumerate}
The smallest such quantum divergences for $\alpha>0$ are the measured R\'enyi divergence \cite{donald86,hiai91,Berta2017,Berta2016}
\begin{align}
D_{\alpha}^M(\rho\|\sigma)=\sup_{\mathcal{M}}D_{\alpha}(\mathcal{M}(\rho)\|\mathcal{M}(\sigma))
\end{align}
for the rank-one projective measurement maps $\mathcal{M}(\cdot)=\sum_x\bra{x}\cdot\ket{x}\proj{x}$ with the supremum over orthonormal bases $\{\ket{x}\}_x$ of $\cH$.\footnote{Including general measurement maps leads to the same quantities \cite[Theorem 4]{Berta2017}.} On the other hand, for $\alpha\in(0,2]$ the largest such quantum divergences are the geometric R\'enyi divergences \cite{matsumoto14,fang21}
\begin{align}\label{eq:geometric}
\widehat{D}_{\alpha}(\rho\|\sigma)&=\frac{1}{\alpha-1}\log\widehat{Q}_{\alpha}(\rho\|\sigma)\quad\text{with}\nonumber\\
\widehat{Q}_{\alpha}(\rho\|\sigma)&=\text{Tr}\left[\sigma^{1/2}\left(\sigma^{-1/2}\rho\sigma^{-1/2}\right)^{\alpha}\sigma^{1/2}\right]\;\text{for $\rho\ll\sigma$}
\end{align}
and the respective limit for $\alpha\to1$ as \cite{belavkin82,matsumoto10}
\begin{align}
\widehat{D}_1(\rho\|\sigma)=\mathrm{Tr}\left[\rho\log\left(\rho^{1/2}\sigma^{-1}\rho^{1/2}\right)\right]\,.
\end{align}
The formal definition from Eq.~\eqref{eq:geometric} can be extended to $\alpha>0$, but only for $\alpha\in(0,2]$ the geometric R\'enyi divergences have the following operational characterisation.

\begin{lemma}[Matsumoto \cite{matsumoto10,matsumoto14}]\label{lem:matsumoto}
Let $\rho,\sigma\in\cS(\cH)$ and $\alpha\in(0,2]$. Then, we have
\begin{align}
\widehat{D}_{\alpha}(\rho\|\sigma)=\inf_{(\Gamma,P,Q)}D_{\alpha}(P\|Q)\,,
\end{align}
where the infimum is over probability distributions $P,Q$ over finite alphabets $\mathcal{X}$, and positive trace preserving maps with $\Gamma(P)=\rho$ and $\Gamma(Q)=\sigma$.
\end{lemma}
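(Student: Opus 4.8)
The plan is to prove the two matching inequalities separately, with the substance of the lemma lying in the explicit construction of an optimal \emph{reverse test}. Throughout I assume $\rho\ll\sigma$ (otherwise $\widehat{D}_\alpha(\rho\|\sigma)=+\infty$ and there is nothing to show) and work on $\operatorname{supp}(\sigma)$, where $\sigma$ is invertible.

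For the achievability direction $\inf_{(\Gamma,P,Q)}D_\alpha(P\|Q)\le\widehat{D}_\alpha(\rho\|\sigma)$, I would exhibit a single reverse test attaining the value. The key object is the operator $\omega:=\sigma^{-1/2}\rho\sigma^{-1/2}$ appearing in \eqref{eq:geometric}. Diagonalising $\omega=\sum_x\lambda_x\proj{e_x}$ in an orthonormal eigenbasis $\{\ket{e_x}\}$ of $\operatorname{supp}(\sigma)$ and setting $\ket{f_x}:=\sigma^{1/2}\ket{e_x}$, one obtains the common (non-orthogonal) frame decomposition $\sigma=\sum_x\proj{f_x}$ and $\rho=\sum_x\lambda_x\proj{f_x}$. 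With $q_x:=\langle f_x|f_x\rangle=\bra{e_x}\sigma\ket{e_x}>0$ and $p_x:=\lambda_x q_x$, the tuples $P=(p_x)_x$ and $Q=(q_x)_x$ are probability distributions (they sum to $\tr\rho=1$ and $\tr\sigma=1$), and the map $\Gamma(\cdot):=\sum_x q_x^{-1}\bra{x}\cdot\ket{x}\,\proj{f_x}$, with Kraus operators $K_x=q_x^{-1/2}\ket{f_x}\bra{x}$, is trace preserving and satisfies $\Gamma(P)=\rho$, $\Gamma(Q)=\sigma$. A direct computation then gives $Q_\alpha(P\|Q)=\sum_x p_x^\alpha q_x^{1-\alpha}=\sum_x\lambda_x^\alpha q_x=\widehat{Q}_\alpha(\rho\|\sigma)$, so this reverse test attains $D_\alpha(P\|Q)=\widehat{D}_\alpha(\rho\|\sigma)$ for every $\alpha$.

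For the converse inequality $\widehat{D}_\alpha(\rho\|\sigma)\le\inf_{(\Gamma,P,Q)}D_\alpha(P\|Q)$, I would fix an arbitrary feasible triple and apply data processing. Since the domain of $\Gamma$ is the commutative algebra of diagonal operators, positivity of $\Gamma$ already forces complete positivity, so $\Gamma$ is a genuine quantum channel and property \eqref{eq:data-processing} applies. As $P,Q$ commute, property~1 (or a one-line direct check) gives $\widehat{D}_\alpha(P\|Q)=D_\alpha(P\|Q)$, whence $D_\alpha(P\|Q)=\widehat{D}_\alpha(P\|Q)\ge\widehat{D}_\alpha(\Gamma(P)\|\Gamma(Q))=\widehat{D}_\alpha(\rho\|\sigma)$. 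Taking the infimum over all feasible triples closes the argument.

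I expect the main obstacle to be twofold. First, identifying the right reverse test: the construction works only because the eigendecomposition of the sandwiched operator $\omega$ produces a single frame $\{\ket{f_x}\}$ diagonalising $\rho$ and $\sigma$ simultaneously (with all $\sigma$-weights equal to one), which is precisely the feature making the classical R\'enyi value coincide with $\widehat{Q}_\alpha$. Second, the restriction $\alpha\in(0,2]$ is essential but enters only through the converse: the data-processing step requires $\widehat{D}_\alpha$ to be a valid quantum divergence, which rests on operator convexity of $t\mapsto t^\alpha$ on $[1,2]$ (resp.\ operator concavity on $(0,1)$) and fails for $\alpha>2$. The achievability construction, by contrast, is valid for all $\alpha>0$, consistent with the formula extending beyond the operational regime.
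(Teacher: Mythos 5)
Your proof is correct. The paper does not actually prove this lemma itself---it cites Matsumoto and merely records the optimal triple $(\Gamma,P,Q)$ after the statement---and that recorded optimizer is exactly your achievability construction (your $\proj{f_x}=\sigma^{1/2}\Pi_x\sigma^{1/2}$ with the spectral projectors refined to rank one), while your converse via automatic complete positivity of positive maps on commutative domains plus data processing for $\widehat{D}_{\alpha}$ on $\alpha\in(0,2]$ is the standard argument from the cited references.
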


In fact, for the spectral decomposition $\sigma^{-1/2}\rho\sigma^{-1/2}=\sum_{x\in\mathcal{X}}\lambda_x\Pi_x$ the infimum is achieved as (see \cite[Sec.~4.2.3]{mybook})
\begin{align}
&\Gamma(\cdot)=\sum_{x\in\mathcal{X}}\frac{\bra{x}\cdot\ket{x}}{Q(x)}\sigma^{1/2}\Pi_x\sigma^{1/2}\,,\\
&P(x)=\lambda_xQ(x),\quad Q(x)=\text{Tr}[\sigma\Pi_x]\,.
\end{align}

Additionally, for the axiomatic extension of the classical R\'enyi divergences from Eq.~\eqref{eq:renyi-classical}, the smallest quantum divergences that are also
\begin{enumerate}
\setcounter{enumi}{2}
\item additive on product states\footnote{Divergences with properties 1)\,--\,3) are also termed relative entropies \cite{gour20}, also see \cite{Khatri20} for more details.}
\end{enumerate}
for $\alpha\in[1/2,\infty]$ are the sandwiched R\'enyi divergences \cite{lennert13,wilde13}
\begin{align}\label{eq:sandwiched}
\tD_{\alpha}(\rho\|\sigma)&=\frac{1}{\alpha-1}\log\tQ_{\alpha}(\rho\|\sigma)\quad\text{with}\nonumber\\
\tQ_{\alpha}(\rho\|\sigma)&=\text{Tr}\left[\left(\sigma^{\frac{1-\alpha}{2\alpha}}\rho\sigma^{\frac{1-\alpha}{2\alpha}}\right)^{\alpha}\right]\;\text{for $\rho\ll\sigma$}
\end{align}
and the respective limits for $\alpha\to1$ and $\alpha\to\infty$ as
\begin{align}
\tD_1(\rho\|\sigma)&=\mathrm{Tr}\left[\rho\left(\log\rho-\log\sigma\right)\right]\quad\text{and}\\
\tD_\infty(\rho\|\sigma)&=\inf\left\{\lambda\in\mathbb{R}\middle|\rho\preceq2^\lambda\sigma\right\}\,,
\end{align}
respectively. We note that $\tD_1(\rho\|\sigma)\neq\widehat{D}_1(\rho\|\sigma)$ unless $\rho$ and $\sigma$ commute. Moreover, the formal definition from Eq.~\eqref{eq:sandwiched} can be extended to $\alpha>0$, but as showcased in \cite[Theorem 7]{Berta2017} only for $\alpha\in[1/2,\infty]$ one has the data-processing inequality from Eq.~\eqref{eq:data-processing}.

We record the following noteworthy property.

\begin{lemma}\label{lem:q_chain}
Let $\rho, \sigma \in \cS(\cH)$, $\cE,\cF$ be positive maps from $\cB(\cH)$ to $\cB(\cH')$, and $\alpha\in(0,\infty]$. Then, we have
\begin{align}\label{eq:1}
\tQ_{\alpha}(\cE(\rho) \| \cF(\sigma)) \leq \big|\spec(\sigma)\big|^{\alpha} \tQ_{\alpha} \big(\cE(\cP_{\sigma}(\rho)) \big\|  \cF(\sigma) \big)\,.
\end{align}
\end{lemma}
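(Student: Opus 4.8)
The plan is to reduce the claim to three elementary facts, avoiding any interpolation or variational machinery: the pinching inequality pushed through the positive map $\cE$, the monotonicity of $\tQ_\alpha$ in its first argument with respect to the L\"owner order, and the positive homogeneity of $\tQ_\alpha$ in that same argument. The second operator $\cF(\sigma)$ will play the passive role of a fixed second argument throughout, so I would not need any property of $\cF$ beyond positivity.

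First I would set up the operator inequality. The pinching inequality~\eqref{eq:pinching} reads $\cP_\sigma(\rho)\succeq\big|\spec(\sigma)\big|^{-1}\rho$, which after multiplying by the positive scalar $\big|\spec(\sigma)\big|$ rearranges to $\rho\preceq\big|\spec(\sigma)\big|\,\cP_\sigma(\rho)$. Since $\cE$ is linear and positive, it preserves the L\"owner order, so applying it to both sides yields
\begin{align}
\cE(\rho)\preceq\big|\spec(\sigma)\big|\,\cE\big(\cP_\sigma(\rho)\big)\,.
\end{align}

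Next I would invoke two structural properties of the functional $A\mapsto\tQ_\alpha(A\|\tau)$ for the fixed positive operator $\tau=\cF(\sigma)$. Writing $\tQ_\alpha(A\|\tau)=\operatorname{Tr}\big[(\tau^{\frac{1-\alpha}{2\alpha}}A\,\tau^{\frac{1-\alpha}{2\alpha}})^{\alpha}\big]$, positive homogeneity of degree $\alpha$ is immediate by pulling the scalar out of the inner bracket, giving $\tQ_\alpha(cA\|\tau)=c^{\alpha}\tQ_\alpha(A\|\tau)$ for $c>0$. For monotonicity, if $A\preceq B$ then conjugation by the fixed operator $\tau^{\frac{1-\alpha}{2\alpha}}$ preserves the order, and $X\mapsto\operatorname{Tr}[X^{\alpha}]$ is monotone non-decreasing on positive operators with respect to the L\"owner order for every $\alpha>0$, so $\tQ_\alpha(A\|\tau)\leq\tQ_\alpha(B\|\tau)$. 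Chaining the displayed order relation with these two properties then gives
\begin{align}
\tQ_\alpha\big(\cE(\rho)\|\cF(\sigma)\big)&\leq\tQ_\alpha\big(\big|\spec(\sigma)\big|\,\cE(\cP_\sigma(\rho))\,\big\|\,\cF(\sigma)\big)\nonumber\\
&=\big|\spec(\sigma)\big|^{\alpha}\,\tQ_\alpha\big(\cE(\cP_\sigma(\rho))\|\cF(\sigma)\big)\,,
\end{align}
which is the desired inequality.

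The step deserving the most care — and which I regard as the main obstacle — is the L\"owner monotonicity of $X\mapsto\operatorname{Tr}[X^{\alpha}]$. I would justify it through Weyl's monotonicity theorem: $A\preceq B$ forces $\lambda_i(A)\leq\lambda_i(B)$ for the eigenvalues listed in non-increasing order, and since $t\mapsto t^{\alpha}$ is increasing on $[0,\infty)$ for every $\alpha>0$, summing gives the claim. It is essential here that only the trace, and not the operator, is compared, since operator monotonicity of $t^{\alpha}$ itself fails for $\alpha>1$. Two boundary remarks then complete the picture. For $\alpha>1$ the quantity $\tQ_\alpha$ may be $+\infty$, but the support inclusion $\operatorname{supp}\cE(\rho)\subseteq\operatorname{supp}\cE(\cP_\sigma(\rho))$ forced by the operator inequality guarantees that the right-hand side is infinite whenever the left-hand side is, so the bound is never violated. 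For $\alpha=\infty$ one runs the identical argument with $\tQ_\infty(\cdot\|\tau)=\big\|\tau^{-1/2}(\cdot)\tau^{-1/2}\big\|_\infty$, which is likewise order-monotone and homogeneous (now of degree one); this produces the factor $\big|\spec(\sigma)\big|$, consistent with and no stronger than the displayed exponent since $\big|\spec(\sigma)\big|\geq1$.
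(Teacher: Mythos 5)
Your proposal is correct and follows essentially the same route as the paper: pinching inequality pushed through the positive map $\cE$, conjugation by $\cF(\sigma)^{\frac{1-\alpha}{2\alpha}}$, and trace monotonicity of $t\mapsto t^{\alpha}$ under the L\"owner order. Your added justification via Weyl's monotonicity and your handling of the $\alpha=\infty$ and infinite-value cases are sound elaborations of steps the paper leaves implicit.
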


\begin{proof}
We follow \cite[Eq.~(4.52)]{mybook}. From the pinching inequality in Eq.~\eqref{eq:pinching} together with the positivity of $\cE$ we get
\begin{align}
\cE(\rho)\preceq\big|\text{spec}(\sigma)\big|\cE(\cP_\sigma(\rho))\,.
\end{align}
Multiplying this from the left and the right with the positive operator $\cF(\sigma)^{\frac{1-\alpha}{2\alpha}}$ leads to
\begin{align}
&\cF(\sigma)^{\frac{1-\alpha}{2\alpha}}\cE(\rho)\cF(\sigma)^{\frac{1-\alpha}{2\alpha}}\nonumber\\
&\preceq\big|\text{spec}(\sigma)\big|\cF(\sigma)^{\frac{1-\alpha}{2\alpha}}\cE(\cP_\sigma(\rho))\cF(\sigma)^{\frac{1-\alpha}{2\alpha}}\,.
\end{align}
The claim then follows by the monotonicity of the trace of the monotone function $t\mapsto t^{\alpha}$.
\end{proof}

Finally, other definitions of interest include the Petz R\'enyi divergences \cite{petz86}, the $\alpha$-$z$-divergences \cite{audenaert13}, and divergences based on convex optimization methods \cite{fawzi21,brown21}. We do not directly employ these further definitions in our work, but note that \cite[Corollary 5.2]{fawzi21} proves quantum extensions of the chain rule from Eq.~\eqref{eq:classical_chain} in terms of divergences based on convex optimization methods.

All of aforementioned quantum R\'enyi divergences are extended to quantum channels $\cE,\cF$ as \cite{Cooney16,leditzky18}
\begin{align}
\mathbb{D}_{\alpha}\big(\cE\big\| \cF\big)=\sup_{\rho\in\cS(\cH)}\mathbb{D}_{\alpha}\big(\cE(\rho)\big\| \cF(\rho)\big)
\end{align}
or alternatively as the (generally larger) stabilized version \cite{wilde20,berta19}
\begin{align}
&\mathbb{D}_{\alpha}^{\text{stab}}\big(\cE\big\| \cF\big)\nonumber\\
&=\sup_{\rho\in\cS(\cH\otimes\cH)}\mathbb{D}_{\alpha}\big((\cE_A\otimes\mathcal{I}_R)(\rho_{AR})\big\| (\cF_A\otimes\mathcal{I}_R)(\rho_{AR})\big)\,,
\end{align}
where $\mathcal{I}_R$ denotes the identity channel on $\cS(\cH_R)$, and the supremum is over all $\rho_{AR}\in\cS(\cH_A\otimes\cH_R)$ and finite dimensional $\cH_R$.\footnote{One might choose $|R|=|A|$ without loss of generality as, e.g., discussed after \cite[Definition II.2]{leditzky18}.} In the general quantum case, the divergences $\mathbb{D}_{\alpha}\big(\cE\big\| \cF\big)$ and $\mathbb{D}_{\alpha}^{\text{stab}}\big(\cE\big\| \cF\big)$ are not additive on product channels (see, e.g., \cite{hastings09,kun20}), and thus one defines the corresponding regularized versions
\begin{align}
\mathbb{D}_{\alpha}^\infty\big(\cE\big\| \cF\big)&=\lim_{n\to\infty}\frac{1}{n}\mathbb{D}_{\alpha}\big(\cE^{\otimes n}\big\| \cF^{\otimes n}\big)\\
\mathbb{D}_{\alpha}^{\text{stab},\infty} \big(\cE\big\| \cF\big)&=\lim_{n\to\infty}\frac{1}{n}\mathbb{D}_{\alpha}^{\text{stab}}\big(\cE^{\otimes n}\big\| \cF^{\otimes n}\big)\,.
\end{align}
Note that these limits are well-defined whenever $\widetilde{D}_{\infty}(\mathcal{E}\|\mathcal{F})$ is finite. To see this consider the sequences $f_n = \frac{1}{n}\mathbb{D}_{\alpha}\big(\cE^{\otimes n}\big\| \cF^{\otimes n}\big)$ where $n \in \mathbb{N}$. We first note that $f_n$ is bounded from above by $\widetilde{D}_{\infty}(\mathcal{E}\|\mathcal{F})$ for all $n$ since $\mathbb{D}_{\alpha}(\rho\|\sigma) \leq \widehat{D}_{\alpha}(\rho\|\sigma) \leq \widetilde{D}_{\infty}(\rho\|\sigma)$ for any pair of states (see, e.g.,~\cite{mybook}) and, furthermore, $\widetilde{D}_{\infty}(\mathcal{E}\|\mathcal{F})$ is additive for tensor product channels. We further get $f_{n+1} \geq \frac{n}{n+1} f_n$ by choosing product inputs and hence the function is monotonic asymptotically and there can only be a single accumulation point.

%%%%%%%%%%%%%%%%%%%%%%%%%%%%%%%%%%%%%%%%%%%%%%%%%%%%%%%%%%%%%%%%%%%%%%%%%%%%%%%%%%

\section{Main results}\label{sec:main}

%\MB{General support questions not yet dealt with.}
Our main result is the following R\'enyi divergences based simple meta chain rule for quantum channels.

\begin{proposition} \label{prop:chain}
Let $\rho, \sigma \in \cS(\cH)$, $\cE,\cF$ be positive maps from $\cB(\cH)$ to $\cB(\cH')$, and $\alpha\in(0,\infty]$. Then, we have
\begin{align}
\mathbb{D}_{\alpha}\big( \mathcal{E}(\rho)  \big\|  \mathcal{F}(\sigma) \big) \leq \widehat{D}_{\alpha}(\rho\|\sigma) +\mathbb{D}_{\alpha} \big(\cE\big\| \cF\big)\,,
\end{align}
where $\mathbb{D}_{\alpha}$ denotes any R\'enyi divergence of order $\alpha$.
\end{proposition}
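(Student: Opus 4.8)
The plan is to reduce the statement to a classical R\'enyi divergence estimate combined with the data-processing inequality, using the explicit decomposition of $\rho$ and $\sigma$ recorded right after Lemma~\ref{lem:matsumoto}. Concretely, for the spectral decomposition $\sigma^{-1/2}\rho\sigma^{-1/2}=\sum_x\lambda_x\Pi_x$ I would set $\Gamma_x=\tfrac{1}{Q(x)}\sigma^{1/2}\Pi_x\sigma^{1/2}$, $Q(x)=\tr[\sigma\Pi_x]$, and $P(x)=\lambda_x Q(x)$, so that each $\Gamma_x$ is a state, $\rho=\sum_x P(x)\Gamma_x$, and $\sigma=\sum_x Q(x)\Gamma_x$. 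A direct computation valid for every $\alpha$ gives $P(x)^\alpha Q(x)^{1-\alpha}=\lambda_x^\alpha\tr[\sigma\Pi_x]$, hence $\sum_x P(x)^\alpha Q(x)^{1-\alpha}=\widehat{Q}_\alpha(\rho\|\sigma)$ and therefore $D_\alpha(P\|Q)=\widehat{D}_\alpha(\rho\|\sigma)$.

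With this decomposition fixed, I would lift the channel outputs to the classical--quantum operators $\omega=\sum_x P(x)\proj{x}\otimes\cE(\Gamma_x)$ and $\tau=\sum_x Q(x)\proj{x}\otimes\cF(\Gamma_x)$, whose reductions on the quantum register are exactly $\cE(\rho)$ and $\cF(\sigma)$ by linearity. The argument then proceeds in three moves: (i) discarding the classical register is a channel, so data processing (property~2) yields $\mathbb{D}_\alpha(\cE(\rho)\|\cF(\sigma))\leq\mathbb{D}_\alpha(\omega\|\tau)$; (ii) the classical--quantum decomposition rule (property~1), extended to positive operators by homogeneity of $\mathbb{Q}_\alpha$, gives $\mathbb{Q}_\alpha(\omega\|\tau)=\sum_x P(x)^\alpha Q(x)^{1-\alpha}\,\mathbb{Q}_\alpha(\cE(\Gamma_x)\|\cF(\Gamma_x))$; and (iii) since each $\Gamma_x$ is a valid input state, the definition of the channel divergence bounds $\mathbb{D}_\alpha(\cE(\Gamma_x)\|\cF(\Gamma_x))\leq\mathbb{D}_\alpha(\cE\|\cF)$ uniformly in $x$. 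Substituting the uniform per-term bound, factoring out $2^{(\alpha-1)\mathbb{D}_\alpha(\cE\|\cF)}$, and recombining with $\sum_x P(x)^\alpha Q(x)^{1-\alpha}$ collapses everything to $\mathbb{D}_\alpha(\omega\|\tau)\leq D_\alpha(P\|Q)+\mathbb{D}_\alpha(\cE\|\cF)=\widehat{D}_\alpha(\rho\|\sigma)+\mathbb{D}_\alpha(\cE\|\cF)$.

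The step requiring the most care — and the point that extends the claim to the full range $\alpha\in(0,\infty]$ — is move (iii) together with taking $\tfrac{1}{\alpha-1}\log(\cdot)$. For $\alpha>1$ the per-term bound reads $\mathbb{Q}_\alpha(\cE(\Gamma_x)\|\cF(\Gamma_x))\leq 2^{(\alpha-1)\mathbb{D}_\alpha(\cE\|\cF)}$, whereas for $\alpha<1$ the factor $\tfrac{1}{\alpha-1}$ is negative and the inequality on $\mathbb{Q}_\alpha$ flips, yet in both cases applying $\tfrac{1}{\alpha-1}\log(\cdot)$ restores the same final upper bound; the endpoints $\alpha=1$ and $\alpha=\infty$ I would treat either by the corresponding limits or by reading property~1 as the additive chain rule of the relative entropy, respectively the max-divergence, directly. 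The conceptual obstacle I anticipate is the temptation to invoke Matsumoto's variational formula (Lemma~\ref{lem:matsumoto}), which certifies optimality of this decomposition only for $\alpha\in(0,2]$; the resolution is that the argument never needs optimality — it needs only one admissible decomposition whose classical divergence equals $\widehat{D}_\alpha(\rho\|\sigma)$, and the explicit $(\Gamma,P,Q)$ above achieves this identity for all $\alpha$. The remaining routine caveats are the support conditions ($\rho\ll\sigma$ and $Q(x)>0$ on the support) and checking that the lift $\omega,\tau$ stays within the domain on which properties~1 and~2 are available.
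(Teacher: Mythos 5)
Your proposal is correct and follows essentially the same route as the paper's own proof: the explicit Matsumoto decomposition $(\Gamma,P,Q)$, the lift to classical--quantum operators, data processing for discarding the classical register, the decomposition property~1, and the uniform bound by the channel divergence. Your explicit check that $D_\alpha(P\|Q)=\widehat{D}_\alpha(\rho\|\sigma)$ holds for \emph{all} $\alpha>0$ (not just the range $(0,2]$ covered by Lemma~\ref{lem:matsumoto}) is a worthwhile clarification that the paper's proof leaves implicit.
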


\begin{proof}
Using Matsumoto's construction (Lemma \ref{lem:matsumoto}), we write
\begin{align}
\mathcal{E}(\rho) &= \sum_x \hat{p}_x\, \mathcal{E}\big(\underbrace{\Gamma\big(\proj{x}\big)}_{=:\rho^x}\big) \quad \textrm{and}\\
\mathcal{F}(\sigma) &= \sum_x \hat{q}_x\, \mathcal{F}\big(\underbrace{\Gamma\big(\proj{x}\big)}_{=\rho^x} \big)\,.
\end{align}
We then use the data-processing inequality to get
\begin{align}
&\mathbb{D}_{\alpha}\big( \mathcal{E}(\rho) \big\|  \mathcal{F}(\sigma) \big)\nonumber\\
&\leq 
\mathbb{D}_{\alpha} \left( \sum_x \hat{p}_x \proj{x} \otimes \mathcal{E}\left(\rho^x\right) \middle\|  \sum_x \hat{q}_x \proj{x} \otimes \mathcal{F}\left(\rho^x\right)  \right) \\
&=\frac{1}{\alpha-1}\log\left(\sum_x\hat{p}_x^{\alpha}\hat{q}_x^{1-\alpha}2^{(\alpha-1)\mathbb{D}_{\alpha}\left(\mathcal{E}\left(\rho^x\right)\middle\|\mathcal{F}\left(\rho^x\right)\right)}\right)\\
&\leq D_\alpha(\hat{p} \| \hat{q} ) + \max_x \mathbb{D}_{\alpha} \left( \mathcal{E}\left(\rho^x\right) \middle\| \mathcal{F}\left(\rho^x\right)  \right)\,,
\end{align}
which implies the statement we set out to prove.
\end{proof}

The corresponding geometric chain rules for $\alpha\in(0,2]$ as
\begin{align}\label{eq:geometric-chain}
\widehat{D}_{\alpha}\big( \mathcal{E}(\rho)  \big\|  \mathcal{F}(\sigma) \big) \leq \widehat{D}_{\alpha}(\rho\|\sigma) +\widehat{D}_{\alpha} \big(\cE\big\| \cF\big)\,,
\end{align}
generalize the stabilized geometric chain rules from \cite[Lemma 7]{fang21} that were proven with different techniques for completely positive maps $\cE_A,\cF_A$ and $\alpha\in(1,2]$ as
\begin{align}\label{eq:geometric-chain2}
&\widehat{D}_{\alpha}\big( (\mathcal{E}_A\otimes\mathcal{I}_R)(\rho_{AR})  \big\|  (\mathcal{F}_A\otimes\mathcal{I}_R)(\sigma_{AR}) \big)\nonumber\\
&\qquad \leq \widehat{D}_{\alpha}(\rho_{AR}\|\sigma_{AR}) +\widehat{D}_{\alpha}^{\text{stab}} \big(\cE_A\big\| \cF_A\big)\,,
\end{align}
as well as their extension to $\alpha\in(0,1)$ \cite[Proposition 45]{Katariya21}. We note that the stabilized version of the channel divergence was required in~\eqref{eq:geometric-chain2}, whereas our Proposition~\ref{prop:chain} does not require stabilization.

To obtain chain rules without featuring geometric R\'enyi divergences, we can apply pre-processing maps in terms of rank-one projective measurements $\Lambda(\cdot)$. Namely, for $\rho,\sigma\in\cS(\cH)$ we have that the pre-processed inputs $\Lambda(\rho),\Lambda(\sigma)$ commute and thus we find the chain rule
\begin{align}\label{eq:pre-processing}
\mathbb{D}_{\alpha}\big( \mathcal{E}( \Lambda(\rho))  \big\|  \mathcal{F}(\Lambda(\sigma)) \big) 
\leq D_{\alpha,M}(\rho\|\sigma) + \mathbb{D}_{\alpha} \big(\cE\big\| \cF \big) \,.
\end{align}
We can then again eliminate the pre-processing map for the sandwiched R\'enyi divergences with $\alpha\in(1,\infty]$ by choosing the overall rank-one projective measurement map $\Lambda(\cdot)=\mathcal{M}_\sigma(P_\sigma(\cdot))$ with the pinching map $P_\sigma(\cdot)$ and the projective rank-one measurement map $\mathcal{M}_\sigma(\cdot)$ in the joint eigenbasis of $\sigma$ and $P_\sigma(\rho)$\,---\,then leaving those states invariant. Applying Lemma \ref{lem:q_chain} for $\alpha\in(1,\infty]$ immediately gives
\begin{align}
\tD_{\alpha}\big( \mathcal{E}( \Lambda(\rho))  \big\|  \mathcal{F}(\Lambda(\sigma)) \big)\geq&\;\tD_{\alpha}\big( \mathcal{E}(\rho)  \big\|  \mathcal{F}(\sigma) \big)\nonumber\\
&-\frac{\alpha}{\alpha-1}\log\big|\text{spec}(\sigma)\big|
\end{align}
and thus we find the following sandwiched chain rule.

\begin{corollary}\label{cor:sandwiched-chain}
Let $\rho, \sigma \in \cS(\cH)$, $\cE,\cF$ be positive maps from $\cB(\cH)$ to $\cB(\cH')$, and $\alpha\in(1,\infty]$. Then, we have
\begin{align}\label{eq:sandwiched-chain}
\tD_{\alpha}\big(\cE(\rho) \big\| \cF(\sigma) \big)\leq&\;D_{\alpha,M}(\rho\|\sigma) +\tD_{\alpha}\big(\cE\big\| \cF \big)\nonumber\\
&+ \frac{\alpha}{\alpha-1} \log \big|\spec(\sigma)\big| \,.
\end{align}
\end{corollary}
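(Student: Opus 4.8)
The plan is to assemble the statement from two facts already established: the pre-processing chain rule of Eq.~\eqref{eq:pre-processing} and the pinching estimate of Lemma~\ref{lem:q_chain}. The idea is to insert a rank-one projective measurement $\Lambda$ before the maps $\cE,\cF$; this forces the two inputs to commute, so that the geometric term in Proposition~\ref{prop:chain} collapses to a classical, and hence measured, divergence. The measurement is then chosen so that undoing it afterwards is cheap, which is exactly what Lemma~\ref{lem:q_chain} quantifies, the only price being the spectral factor $\tfrac{\alpha}{\alpha-1}\log|\spec(\sigma)|$.

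First I would record Eq.~\eqref{eq:pre-processing} for a generic rank-one projective measurement $\Lambda$. Because $\Lambda(\rho)$ and $\Lambda(\sigma)$ are simultaneously diagonal, $\widehat{D}_\alpha(\Lambda(\rho)\|\Lambda(\sigma))$ reduces to the classical quantity $D_\alpha(\Lambda(\rho)\|\Lambda(\sigma))$, which is at most $D_{\alpha,M}(\rho\|\sigma)$ since the measured divergence is a supremum over all such measurements; feeding this into Proposition~\ref{prop:chain} applied to the inputs $\Lambda(\rho),\Lambda(\sigma)$ yields Eq.~\eqref{eq:pre-processing} with $\mathbb{D}_\alpha=\tD_\alpha$.

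Next I would specialise to $\Lambda(\cdot)=\mathcal{M}_\sigma(P_\sigma(\cdot))$, with $P_\sigma$ the pinching with respect to $\sigma$ and $\mathcal{M}_\sigma$ the rank-one measurement in a joint eigenbasis of the commuting pair $\sigma$ and $P_\sigma(\rho)$. Since $P_\sigma(\sigma)=\sigma$ and both operators are diagonal in that basis, the measurement acts trivially on them, giving $\Lambda(\sigma)=\sigma$ and $\Lambda(\rho)=P_\sigma(\rho)$. Thus the right-hand argument is untouched while the left-hand one is pinched, so Lemma~\ref{lem:q_chain} applies with $\cF(\sigma)$ held fixed. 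For $\alpha\in(1,\infty]$ the factor $\tfrac{1}{\alpha-1}$ is positive, so taking $\tfrac{1}{\alpha-1}\log$ of the inequality in Lemma~\ref{lem:q_chain} turns it into the lower bound $\tD_\alpha(\cE(\Lambda(\rho))\|\cF(\Lambda(\sigma)))\geq\tD_\alpha(\cE(\rho)\|\cF(\sigma))-\tfrac{\alpha}{\alpha-1}\log|\spec(\sigma)|$ displayed above the statement.

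Finally I would chain the two inequalities: the specialised Eq.~\eqref{eq:pre-processing} bounds $\tD_\alpha(\cE(\Lambda(\rho))\|\cF(\Lambda(\sigma)))$ from above by $D_{\alpha,M}(\rho\|\sigma)+\tD_\alpha(\cE\|\cF)$, while the pinching step bounds it from below by $\tD_\alpha(\cE(\rho)\|\cF(\sigma))-\tfrac{\alpha}{\alpha-1}\log|\spec(\sigma)|$; rearranging produces Eq.~\eqref{eq:sandwiched-chain}. The step I expect to be most delicate is the sign of $\tfrac{1}{\alpha-1}$: it is precisely what converts Lemma~\ref{lem:q_chain} into an upper bound on $\tD_\alpha(\cE(\rho)\|\cF(\sigma))$, and it is the reason the construction is confined to $\alpha>1$ and does not carry over verbatim to $\alpha\in(0,1)$. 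I would also verify the endpoint $\alpha=\infty$ separately, where $\tD_\infty$ is governed by the operator inequality $\rho\preceq 2^\lambda\sigma$ and the pinching bound transfers directly.
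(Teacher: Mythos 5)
Your proposal is correct and follows essentially the same route as the paper's derivation in Section~\ref{sec:main}: apply Proposition~\ref{prop:chain} to the commuting pre-processed inputs to get Eq.~\eqref{eq:pre-processing}, choose $\Lambda(\cdot)=\mathcal{M}_\sigma(P_\sigma(\cdot))$ so that $\Lambda(\sigma)=\sigma$ and $\Lambda(\rho)=P_\sigma(\rho)$, and undo the pinching via Lemma~\ref{lem:q_chain}, where the sign of $\tfrac{1}{\alpha-1}$ restricts the argument to $\alpha>1$ exactly as you note. (The paper also records an alternative appendix proof that bypasses Proposition~\ref{prop:chain} by decomposing $\cE(\cP_\sigma(\rho))$ and $\cF(\sigma)$ directly in the joint eigenbasis, but your approach coincides with the main-text derivation.)
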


As a special case, applying above corollary to unital maps $\cF$, the identity matrix $\sigma=1$,\footnote{The definition for $\mathbb{D}_{\alpha}(\rho\|\sigma)$ and its properties straightforwardly apply to general positive operators $\sigma$ on $\cH$ as well.} and $\rho\in\cS(\cH)$, leads for $\alpha\in[1,\infty]$ to
\begin{align}\label{eq:renyi-unital}
H_{\alpha}(\cE(\rho))-H_{\alpha}(\rho)\geq-\inf_{\cF:\,\text{unital}}\tD_{\alpha}(\cE\|\cF)\,,
\end{align}
featuring the quantum R\'enyi entropies $H_\alpha(\rho)=\frac{1}{1-\alpha} \log \mathrm{Tr}\left[\rho^{\alpha}\right]$ and in particular the von Neumann entropy $H_1(\rho)=-\mathrm{Tr}\left[\rho\log\rho\right]$ in the limit $\alpha\to1$. This generalizes the well-known monotonicity property $H_{\alpha}(\cE(\rho))-H_{\alpha}(\rho)\geq0$ for unital quantum channels $\cE$ to the Eq.~\eqref{eq:renyi-unital} for generic positive maps.

Going back to the general case, but restricting our attention to tensor-stable positive maps $\mathcal{E}$ and $\mathcal{F}$, such that $\mathcal{E}^{\otimes n}$ and $\mathcal{F}^{\otimes n}$ are positive maps. Applying Corollary \ref{cor:sandwiched-chain} to $\rho^{\otimes n}$, $\sigma^{\otimes n}$, $\mathcal{E}^{\otimes n}$ and $\mathcal{F}^{\otimes n}$, and using that $\left|\spec\left(\sigma^{\otimes n}\right)\right|\leq\text{poly}(n)$, we arrive at the corresponding regularized sandwiched chain rule.

\begin{corollary}\label{cor:alpha}
Let $\rho, \sigma \in \cS(\cH)$, $\cE,\cF$ be tensor-stable positive maps from $\cB(\cH)$ to $\cB(\cH')$, and $\alpha\in(1,\infty]$. Then, we have
\begin{align}
\tD_{\alpha}\big(\cE(\rho) \big\| \cF(\sigma) \big)\leq\tD_{\alpha}(\rho\|\sigma)+\tD_{\alpha}^\infty\big(\cE\big\| \cF \big)\,.
\end{align}
\end{corollary}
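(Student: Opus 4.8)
The plan is to apply the single-copy sandwiched chain rule of Corollary~\ref{cor:sandwiched-chain} to the $n$-fold tensor powers $\rho^{\otimes n}$, $\sigma^{\otimes n}$, $\cE^{\otimes n}$, $\cF^{\otimes n}$, and then to regularize by dividing through by $n$ and letting $n\to\infty$. The tensor-stability hypothesis is precisely what guarantees that $\cE^{\otimes n}$ and $\cF^{\otimes n}$ are again positive maps, so that Corollary~\ref{cor:sandwiched-chain} is applicable to them. Concretely, for every $n\in\mathbb{N}$ this yields
\begin{align}
&\tD_{\alpha}\big(\cE^{\otimes n}(\rho^{\otimes n}) \big\| \cF^{\otimes n}(\sigma^{\otimes n})\big) \nonumber\\
&\;\leq D_{\alpha,M}\big(\rho^{\otimes n}\big\|\sigma^{\otimes n}\big)
+\tD_{\alpha}\big(\cE^{\otimes n}\big\|\cF^{\otimes n}\big) \nonumber\\
&\quad\;+\frac{\alpha}{\alpha-1}\log\big|\spec(\sigma^{\otimes n})\big|\,. \nonumber
\end{align}

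First I would simplify the left-hand side. Since $\cE^{\otimes n}(\rho^{\otimes n})=(\cE(\rho))^{\otimes n}$ and $\cF^{\otimes n}(\sigma^{\otimes n})=(\cF(\sigma))^{\otimes n}$, and since the sandwiched R\'enyi divergence is additive on tensor products (property~3), the left-hand side equals $n\,\tD_{\alpha}(\cE(\rho)\|\cF(\sigma))$. Dividing the whole inequality by $n$ thus isolates the target quantity on the left, which is constant in $n$.

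Next I would control each of the three right-hand terms after division by $n$. For the measured term I would use that the measured R\'enyi divergence is the \emph{smallest} quantum R\'enyi divergence, hence $D_{\alpha,M}\leq\tD_{\alpha}$, combined again with additivity of the sandwiched divergence, to obtain $\tfrac1n D_{\alpha,M}(\rho^{\otimes n}\|\sigma^{\otimes n})\leq\tfrac1n\tD_{\alpha}(\rho^{\otimes n}\|\sigma^{\otimes n})=\tD_{\alpha}(\rho\|\sigma)$ for every $n$; note that only this easy upper bound is needed, not the full identity of the regularized measured divergence with the sandwiched one. The channel term $\tfrac1n\tD_{\alpha}(\cE^{\otimes n}\|\cF^{\otimes n})$ converges to $\tD_{\alpha}^\infty(\cE\|\cF)$ by the very definition of the regularization (whose limit exists whenever $\tD_{\infty}(\cE\|\cF)$ is finite, and the claim is vacuous otherwise). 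Finally, the spectral term vanishes because $|\spec(\sigma^{\otimes n})|=O(\mathrm{poly}(n))$ by the type-counting argument, so $\tfrac1n\cdot\tfrac{\alpha}{\alpha-1}\log|\spec(\sigma^{\otimes n})|\to0$. Taking $n\to\infty$ then gives the claim.

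The genuinely delicate ingredients are already packaged into Corollary~\ref{cor:sandwiched-chain}, which in turn rests on the pinching bound of Lemma~\ref{lem:q_chain}; the remaining work is bookkeeping, so I do not expect a serious obstacle. The one point that warrants care is recognizing that the measured term should be handled by the trivial upper bound rather than by attempting to prove the full asymptotic equality $\lim_n \tfrac1n D_{\alpha,M}(\rho^{\otimes n}\|\sigma^{\otimes n})=\tD_{\alpha}(\rho\|\sigma)$, which is unnecessary here, and in keeping track of the well-definedness of the limit (boundedness and asymptotic monotonicity of $f_n=\tfrac1n\tD_{\alpha}(\cE^{\otimes n}\|\cF^{\otimes n})$), as already established in Section~\ref{sec:methods}.
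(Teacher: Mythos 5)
Your proposal is correct and follows exactly the paper's route: apply Corollary~\ref{cor:sandwiched-chain} to $\rho^{\otimes n}$, $\sigma^{\otimes n}$, $\cE^{\otimes n}$, $\cF^{\otimes n}$ (which is where tensor-stability enters), then divide by $n$ and use additivity of $\tD_{\alpha}$, the bound $D_{\alpha,M}\leq\tD_{\alpha}$, the definition of $\tD_{\alpha}^{\infty}$, and $|\spec(\sigma^{\otimes n})|\leq\mathrm{poly}(n)$. The paper compresses this into one sentence; your write-up simply supplies the same bookkeeping explicitly.
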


This can be compared to \cite[Theorem III.1]{christandl17} of which a special case states for the same range $\alpha\in(1,\infty]$ as in Corollary \ref{cor:alpha}\,---\,but for positive trace preserving maps instead of tensor-stable maps\,---\,that
\begin{align}
\tD_{\alpha}\big(\cE(\rho) \big\| \cF(\sigma) \big)\leq\tD_{\alpha}(\rho\|\sigma) +\tD_\infty\big(\cE\big\| \cF \big)\,.
\end{align}
Applied to quantum channels, this is weaker than Corollary \ref{cor:alpha} as the sandwiched R\'enyi divergences are monotone in $\alpha\in[1/2,\infty]$ and one has $\tD_\infty^\infty\big(\cE\big\| \cF \big)=\tD_\infty\big(\cE\big\| \cF \big)$ \cite[Proposition 10]{wilde20}. Corollary \ref{cor:alpha} also generalizes the stabilized sandwiched chain rules from \cite[Corollary 5.2]{fawzi21} that hold for completely positive maps $\cE_A,\cF_A$, $|R|=|A|$, and $\alpha\in(1,\infty]$ as
\begin{align}
&\tD_{\alpha}\big((\cE_A\otimes\mathcal{I}_R)(\rho_{AR}) \big\| (\cF_A\otimes\mathcal{I}_R)(\sigma_{AR}) \big)\nonumber\\
&\qquad \leq
\tD_{\alpha}(\rho_{AR}\|\sigma_{AR}) +\tD_{\alpha}^{\text{stab},\infty}\big(\cE_A\big\| \cF_A \big)
\end{align}
and were proven with different techniques.

Note that for $\alpha\in(0,1]$ it is unfortunately unclear how to eliminate the pre-processing map and directly regularizing Eq.~\eqref{eq:pre-processing} for general $\alpha\in(0,\infty]$ just leads to
\begin{align}\label{eq:preproc-conjecture}
&\limsup_{n \to \infty}\sup_{\Lambda^n}\frac{1}{n}\mathbb{D}_{\alpha} \left( \mathcal{E}^{\otimes n} \left( \Lambda^n\left(\rho^{\otimes n} \right)\right)  \big\|  \mathcal{F}^{\otimes n} \left(\Lambda^n\left(\sigma^{\otimes n}\right)\right) \right)\nonumber\\
&\qquad \leq \tD_{\alpha}(\rho\|\sigma) + \mathbb{D}_{\alpha}^\infty(\cE \big\| \cF ) \,,
\end{align}
where the supremum is over all rank-one projective measurement maps $\Lambda^n(\cdot)$ acting on $n$ copies. It is then only for $\alpha\in(1,\infty]$ that the left-hand side of Eq.~\eqref{eq:preproc-conjecture} is further lower bounded by $\tD_{\alpha}\big(\cE(\rho) \big\| \cF(\sigma) \big)$ (Lemma \ref{lem:q_chain}), whereas this remains unclear for $\alpha\in(0,1]$.\footnote{In this context, Lemma \ref{lem:q_chain} is not useful for $\alpha\in(0,1]$ as one would have to divide the logarithm on both sides of Eq.~\eqref{eq:1} with the pre-factor $\alpha-1$, which becomes negative in this range.} Notwithstanding, at least for $\alpha=1$ it is known that \cite[Theorem 3.5]{kun20}
\begin{align}
\tD_1\big(\cE(\rho) \big\| \cF(\sigma) \big)\leq\tD_1(\rho\|\sigma)+\tD_1^\infty\big(\cE\big\| \cF \big)\,.
\end{align}
However, this does not directly follow from our R\'enyi divergences chain rules, as the order of taking the limits in $n\to\infty$ and $\alpha\to1$ might ultimately matter when regularizing Eq.~\eqref{eq:sandwiched-chain}. We briefly get back to this question in Section \ref{sec:outlook} and present some conjectures.

%%%%%%%%%%%%%%%%%%%%%%%%%%%%%%%%%%%%%%%%%%%%%%%%%%%%%%%%%%%%%%%%%%%%%%%%%%%%%%%%%%

\section{Applications}\label{sec:applications}

R\'enyi divergences chain rules such as Eq.~\eqref{eq:classical_chain} and its quantum extensions allow to characterize the asymptotic error exponents of adaptive channel discrimination \cite{hayashi09d,polyanski11,berta19,wilde20,kun20,salek20,fang21,fawzi21}. Namely, the crucial quantities to study are the so-called amortized channel divergences \cite{wilde20,berta19,Wang19-2}
\begin{align}
\mathbb{D}_{\alpha}^{\text{a}} \big(\cE\big\| \cF\big)=\sup_{\rho,\sigma}\Big\{&\mathbb{D}_{\alpha} \big((\cE_A\otimes\mathcal{I})(\rho_{AR})\big\|(\cF_A\otimes\mathcal{I}_R)(\sigma_{AR})\big)\nonumber\\
&-\mathbb{D}_{\alpha}\big(\rho_{AR}\big\|\sigma_{AR}\big)\Big\}\,,
\end{align}
where the supremum is over all $\rho_{AR},\sigma_{AR}\in\cS(\cH_A\otimes\cH_R)$ and finite dimensional $\cH_R$.\footnote{No bound on the dimension $|R|$ is known.} For classical \cite{hayashi09d,polyanski11} and classical-quantum channels \cite{berta19,wilde20,salek20} it is derived from the chain rules in Eq.~\eqref{eq:classical_chain} and its classical-quantum extensions that for $\alpha>0$ and all the relevant channel divergences
\begin{align}
\mathbb{D}_{\alpha}^{\text{a}} \big(\cE\big\| \cF\big)=\mathbb{D}_{\alpha} \big(\cE\big\| \cF\big)\,.
\end{align}
This then means that adaptive or correlated input strategies are of no asymptotic advantage compared to product strategies and one immediately obtains a full asymptotic characterization of channel discrimination (see aforementioned references for details). In the general quantum case, one only has the stabilized and regularized chain rules as in Corollary \ref{cor:alpha}, and for $\alpha\in[1,\infty]$ one then derives \cite{kun20,fawzi21}
\begin{align}
\tD_{\alpha}^{\text{a}} \big(\cE\big\| \cF\big)=\tD_{\alpha}^{\text{stab},\infty} \big(\cE\big\| \cF\big)\,,
\end{align}
which includes $\alpha=1$ thanks to \cite[Theorem 3.5]{kun20}. Consequently, this shows that adaptive strategies are of no asymptotic advantage over entangled input strategies\,---\,with the regularization reflecting that the latter are still needed compared to product input strategies \cite{kun20}. Nonetheless, for a full asymptotic characterization of channel discrimination technical question concerning the $\alpha\to1$ limit remain \cite{fawzi21} (cf.~the discussion in Section \ref{sec:outlook}). Alternatively, one might resort to single-letter converse bounds as provided by the geometric chain rules Eqs.~\eqref{eq:geometric-chain}--\eqref{eq:geometric-chain2} giving that
\begin{align}
\widehat{D}_{\alpha}^{\text{a}} \big(\cE\big\| \cF\big)=\widehat{D}_{\alpha}^{\text{stab}} \big(\cE\big\| \cF\big)\,.
\end{align}
We refer to \cite{fang21} for details. One advantage, of our novel geometric chain rule in Eq.~\eqref{eq:geometric-chain} is that it does not require a stabilization system and hence one might explore channel discrimination tasks without additional quantum memory as done in the classical case \cite{hayashi09d} (or bounded quantum memory of fixed dimension).

%%%%%%%%%%%%%%%%%%%%%%%%%%%%%%%%%%%%%%%%%%%%%%%%%%%%%%%%%%%%%%%%%%%%%%%%%%%%%%%%%%

\section{Outlook}\label{sec:outlook}

We hope that our matrix analysis based techniques will be more broadly useful to derive entropy inequalities for quantum channels. Many fundamental questions remain with respect to the goal of giving for $\alpha>0$ the most general R\'enyi divergence chain rules for quantum channels. For example, our main results from Section \ref{sec:main} immediately raise the question if
\begin{align}\label{eq:main-conjecture}
&\limsup_{n \to \infty}\sup_{\Lambda^n}\frac{1}{n}\tD_1 \left( \mathcal{E}^{\otimes n} \left( \Lambda^n\left(\rho^{\otimes n} \right)\right)  \big\|  \mathcal{F}^{\otimes n} \left(\Lambda^n\left(\sigma^{\otimes n}\right)\right) \right)\nonumber\\
&\qquad \geq\tD_1\big(\cE(\rho) \big\| \cF(\sigma) \big)\,,
\end{align}
where the supremum is over all rank-one projective measurement maps $\Lambda^n(\cdot)$ acting on $n$ copies. Choosing the standard pinching based measurement map $\Lambda^n(\cdot)=\mathcal{M}_{\sigma^{\otimes n}}(P_{\sigma^{\otimes n}}(\cdot))$ in the form of Eq.~\eqref{eq:pinching-measure} only leads to
\begin{align}
&\frac{1}{n} \sup_{\Lambda^n}  \tD_1 \left( \mathcal{E}^{\otimes n} \left( \Lambda^n\left(\rho^{\otimes n} \right)\right)  \big\|  \mathcal{F}^{\otimes n} \left(\Lambda^n\left(\sigma^{\otimes n}\right)\right) \right)\nonumber\\
&\geq\frac{1}{n} \tD_1 \left( \mathcal{E}^{\otimes n} \left(\int\mu(\mathrm{d}t)\,\left(\sigma^{it}\rho\sigma^{-it}\right)^{\otimes n}\right)  \big\|  \left(\mathcal{F}\left(\sigma\right)\right)^{\otimes n} \right)\\
&\geq\min_t\tD_1 \left(\mathcal{E}\left(\sigma^{it}\rho\sigma^{-it}\right) \middle\|\mathcal{F}(\sigma)\right)-\frac{1}{n}\log\text{poly}(n)\,,
\end{align}
where we used in the last inequality that by Carath\'eodory's theorem the integral $\int\mu(\mathrm{d}t)(\cdot)^{\otimes n}$ is discretized with at most $\text{poly}(n)$ terms \cite[Theorem D.5]{bertachristandl11} and can then by \cite[Lemma 2.5]{berta21} be taken outside of the divergence at the price of the asymptotically small $\log\text{poly}(n)$ fudge term. The appearance of unitary operators like $\sigma^{it}$ is reminiscent of previous work on quantum entropy inequalities \cite{fawzirenner14,brandao14,wilde15,sutter16a,junge18,sutter16,sutter17,berta21}, but here we would need to eliminate them.

As the $\alpha\in(1,\infty]$ analogue of Eq.~\eqref{eq:main-conjecture} holds by Corollary \ref{cor:sandwiched-chain}, we might also ask more generally what happens for the whole parameter range $\alpha\in(0,1]$. This is strongly connected with the question if $\widehat{D}_{\alpha}^{\text{stab},\infty} \big(\cE\big\| \cF\big)$ is continuous for $\alpha\to1$ \cite{fawzi21}. Any insight on this question would have widespread applications in quantum information theory \cite{Fang22}\,---\,starting from the fundamental statistical task of channel discrimination.

{\it Note added:} After submitting our paper, the work \cite{Metger22} on generalised entropy accumulation appeared. A crucial proof step is their complementary R\'enyi divergence chain rule \cite[Theorem 3.1]{Metger22} for quantum channels and it would be interesting to explore the interplay with our purely matrix analysis based techniques.

%%%%%%%%%%%%%%%%%%%%%%%%%%%%%%%%%%%%%%%%%%%%%%%%%%%%%%%%%%%%%%%%%%%%%%%%%%%%%%%%%%

\paragraph*{Acknowledgements} This work was completed prior to MB joining the AWS Center for Quantum Computing. This research is supported by the National Research Foundation, Prime Minister’s Office, Singapore and the Ministry of Education, Singapore under the Research Centres of Excellence programme. MT is also supported in part by NUS startup grants (R-263-000-E32-133 and R-263-000-E32-731). We thank David Sutter for discussions.

%%%%%%%%%%%%%%%%%%%%%%%%%%%%%%%%%%%%%%%%%%%%%%%%%%%%%%%%%%%%%%%%%%%%%%%%%%%%%%%%%%

\appendix

We mention the following, alternative proof of the sandwiched chain rule (Corollary \ref{cor:sandwiched-chain}). This directly lifts the classical chain rule from Eq.~\eqref{eq:classical_chain} to the quantum case.

\begin{proof}[Proof of Corollary \ref{cor:sandwiched-chain}]
Let $\{ \ket{x} \}_x$ be a joint eigenbasis for both $\sigma$ and $\cP_{\sigma}(\rho)$. We write
\begin{align}
\cE(\cP_{\sigma}(\rho)) &= \sum_{x} \underbrace{\langle x | \rho | x \rangle}_{=:p_x} \underbrace{ \cE\big( \proj{x} \big) }_{=:\rho^x}
\quad \textrm{and}\\
\cF(\sigma) &= \sum_{x} \underbrace{\langle x | \sigma | x \rangle}_{=:q_x} \underbrace{ \cF\big( \proj{x} \big) }_{=:\sigma^x}
\end{align}
and by the data-processing for the partial trace we find
\begin{align}
&\tQ_{\alpha} \big(\cE(\cP_{\sigma}(\rho)) \big\|  \cF(\sigma) \big)\\
&\leq \tQ_{\alpha}\left( \sum_{x} {p_x} \proj{x} \otimes \rho^x \,\middle\|\, \sum_{x} {q_x} \proj{x} \otimes \sigma^x \right) \\
&= \sum_x p_x^{\alpha} q_x^{1-\alpha} \tQ_{\alpha}\left( \rho^x \middle\| \sigma^x \right) \\
&\leq \sum_x p_x^{\alpha} q_x^{1-\alpha} \cdot \max_x \tQ_{\alpha}\left( \rho^x \middle\| \sigma^x \right) \\
&\leq 2^{(\alpha-1)D_{\alpha,M}(\rho\|\sigma)} \cdot \max_{\omega \in \cS} \tQ_{\alpha} \big(\cE(\omega) \big\| \cF(\omega) \big) \,,
\end{align}
where in the last inequality we used that $\rho^x$ and $\sigma^x$ are channel outputs for the same input. Combining this with Lemma \ref{lem:q_chain} and taking the logarithm as well as multiplying by $\frac{1}{\alpha-1}$, yields the desired result.
\end{proof}

%%%%%%%%%%%%%%%%%%%%%%%%%%%%%%%%%%%%%%%%%%%%%%%%%%%%%%%%%%%%%%%%%%%%%%%%%%%%%%%%%%

%\MB{For $\cE=\cF$ and sandwiched divergences, can we say anything about data-processing for only positive maps (a la \cite{hermes17} and alike)? We assume data processing, but then we only use it for getting ride of the classical system $X$. Maybe circular as measured quantity pops up? It is just that we never use Uhlmann, to discuss.}

%%%%%%%%%%%%%%%%%%%%%%%%%%%%%%%%%%%%%%%%%%%%%%%%%%%%%%%%%%%%%%%%%%%%%%%%%%%%%%%%%%

\bibliographystyle{plain}
\bibliography{library_MT}

\end{document}